\newtheorem{theorem}{Theorem}
\newtheorem{exmp}{Example}
\begin{document}

\title{Exploiting Device Heterogeneity in Grant-Free Random Access: A Data-Driven Approach\thanks{A. Jeannerot, M. Egan, L. Chetot, and J.-M. Gorce are with Univ Lyon, Inria, INSA-Lyon, CITI, Villeurbanne, France. This work was presented in part in IEEE VTC-Spring 2023 \cite{vtc}. This work was funded in part by the ANR grant IADoc@UDL at the University of Lyon under grant number ANR-20-THIA-0007. Code used for the simulations and plotting the figures is available at: https://gitlab.inria.fr/maracas/publications/exploiting-device-heterogeneity-in-gfra}}

\author{Alix Jeannerot, Malcolm Egan and Jean-Marie Gorce}

\maketitle

\begin{abstract}
    Grant-free random access (GFRA) is now a popular protocol for large-scale wireless multiple access systems in order to reduce control signaling. Resource allocation in GFRA can be viewed as a form of frame slotted ALOHA, where a ubiquitous design assumption is device homogeneity. In particular, the probability that a device seeks to transmit data is common to all devices. Recently, there has been an interest in designing frame slotted ALOHA algorithms for networks with heterogeneous activity probabilities. These works have established that the throughput can be significantly improved over the standard uniform allocation. However, the algorithms for optimizing the probability a device accesses each slot require perfect knowledge of the active devices within each frame. In practice, this assumption is limiting as device identification algorithms in GFRA rarely provide activity estimates with zero errors. In this paper, we develop a new algorithm based on stochastic gradient descent for optimizing slot allocation probabilities in the presence of activity estimation errors. Our algorithm exploits importance weighted bias mitigation for stochastic gradient estimates, which is shown to provably converge to a stationary point of the throughput optimization problem. In moderate size systems, our simulations show that the performance of our algorithm depends on the type of error distribution. We study symmetric bit flipping, asymmetric bit flipping and errors resulting from a generalized approximate message passing (GAMP) algorithm. In these scenarios, we observe gains up to 40\%, 66\%, and 19\%, respectively.  

\end{abstract}

\begin{IEEEkeywords}
Grant-Free Random Access, Slotted ALOHA, Heterogeneous Activity, Stochastic Optimization  
\end{IEEEkeywords}

\maketitle

\section{Introduction}
A challenge in large-scale multiple access is ensuring reliable transmissions while efficiently utilizing resources and limiting latency. One approach to this challenge in cellular systems is grant-free random access \cite{Shahab2020grant,Liu2018sparse,Choi2021grant,Ding2017survey}, where active devices transmit a preamble immediately followed by data transmission. In contrast with the random access channel (RACH) procedure, grant-free random access does not require a response from a base station before transmitting a data packet. As a consequence, the access delay is reduced.

While the RACH procedure provides devices with reserved resources, this is not the case for grant-free random access. As such, devices must select their own resources, such as the time slot they will use for data transmission. Contention resolution is therefore a critical problem in grant-free random access, requiring a careful selection of slots by devices in order to reduce re-transmissions. 

Due to the lack of coordination in grant-free random access and limited information about the statistics of the network, slot selection policies are often based on variants of frame slotted ALOHA \cite{robertsALOHAPacketSystem1975}. In the original form of frame slotted ALOHA \cite{wieselthierExactAnalysisPerformance1989b}, active devices randomly select a single slot within a frame in a uniform fashion. In principle, frame slotted ALOHA reduces the chance that two active devices simultaneously access the same slot. 

Other variants of frame slotted ALOHA have also been proposed where devices may utilize more than one slot in a single frame including irregular repetition slotted ALOHA (IRSA) \cite{livaGraphBasedAnalysisOptimization2011}  and coded slotted ALOHA (CSA) \cite{paoliniCodedSlottedALOHA2015}. Nevertheless, slotted ALOHA remains the \textit{de facto} MAC protocol to coordinate massive access in IoT applications, such as machine-to-machine communications \cite{Yu2020stabilizing,Yue2023age}.

\subsection{Related Work}

A ubiquitous assumption in the design of random access protocols is that devices are homogeneous; namely, each device is active independently within a frame with a common probability. This assumption applies to the classical frame slotted ALOHA protocols \cite{wieselthierExactAnalysisPerformance1989b}, as well as recent work focusing on stability \cite{Yu2020stabilizing} and age-of-information (AoI) minimization \cite{Yue2023age,Huang2023age}.

The homogeneity assumption is also the basis of variants of frame slotted ALOHA. In the contention resolution diversity slotted ALOHA (CRDSA) scheme \cite{casiniContentionResolutionDiversity2007}, two identical copies of a data packet are transmitted in two different slots. Due to the repetitions, the reception of a single packet is sufficient to decode the data. The performance can be improved further by utilizing successive interference cancellation (SIC) \cite{verdu_multiuser_1998}, where successfully decoded packets can be subtracted from other slots allowing for additional packets from other devices to be decoded. The IRSA scheme \cite{livaGraphBasedAnalysisOptimization2011} improves upon CRDSA by allowing for additional repetitions, with the quantity selected from a probability distribution known to the base station. Decoding of data packets is achieved via message passing algorithms analogous to decoding of low-density parity-check codes (LDPC)\cite{gallagerLowDensityParityCheckCodes1962}. In recent work, the impact of SIC has been investigated \cite{Shieh2022enhanced} and tailored user identification algorithms \cite{Srivatsa2022user}.

CSA \cite{paoliniCodedSlottedALOHA2015} generalizes IRSA by splitting a data block into several packets and applying a packet-level linear block code. This yields coded packets instead of repetitions as in CRDSA and IRSA. The code for each device is selected from a predefined set according to a code probability distribution. As for IRSA, decoding at the base station is achieved via message passing algorithms. Recent work has investigated code distribution design for erasure channels \cite{Zhang2022performance}, the impact of interference cancellation errors \cite{Dumas2021design}, optimization of the power distribution \cite{Su2021noma}, and the impact of time-dependence in packet arrivals for individual devices \cite{Sousa2023study}.

 
In practice, the homogeneity assumption may not hold, either due to heterogeneous activity probabilities or statistical dependence in the activity of multiple devices. For example, devices may be sensors that observe different phenomena and have heterogeneous activity probabilities. Sensors may also observe a common phenomena, which induces correlation in their activity. 

As the homogeneity assumption is ubiquitous, a key question is whether any heterogeneity in device activities can be exploited in order to improve the performance of slotted ALOHA schemes. As frame slotted ALOHA is the basis of modern multiple access systems in the context of the IoT \cite{Yu2020stabilizing}, it is natural to first relax the homogeneity assumption for this family of protocols. 

To this end, in \cite{kalorRandomAccessSchemes2018a}, a new variant of frame slotted ALOHA has been proposed in order to account for heterogeneity in the activity of devices. In this scheme, the probability that a given active device accesses each slot is optimized based on the joint probability distribution of the device activities. The main benefit of this approach is that devices with high activity probabilities or correlation can be allocated in different slots. As a consequence, the probability of contention can be significantly reduced, leading to an improved throughput.

Two key difficulties arising in the approach proposed in \cite{kalorRandomAccessSchemes2018a} are: (i) obtaining knowledge of the activity distribution; and (ii) optimizing the slot allocation probabilities for each device. In \cite{kalorRandomAccessSchemes2018a}, these difficulties were addressed by assuming known pairwise correlations between device activities and utilizing a heuristic method to optimize the allocation probabilities based on upper and lower bounds on the expected throughput. 

In \cite{zhengStochasticResourceOptimization2021}, the two key difficulties (i) and (ii) were addressed via a data-driven approach. In each frame, the activity of each device, rather than the probability distribution, was assumed to be known. The slot allocation probabilities were then updated frame after frame via stochastic gradient ascent. In \cite{zhengStochasticResourceAllocation2022,zhengStochasticResourceOptimization2021}, contention was further managed by exploiting SIC at the physical layer. The slot allocation probabilities were then chosen that maximize the expected sum-rate or the expected number of devices with a signal-to-interference and noise ratio (SINR) exceeding a desired threshold. 

\subsection{Main Contributions}

The schemes in \cite{kalorRandomAccessSchemes2018a,zhengStochasticResourceOptimization2021,zhengStochasticResourceAllocation2022} were shown to significantly improve the performance compared with the standard frame slotted ALOHA scheme by exploiting heterogeneity in the distribution of the device activities. However, it was assumed that either the probability distribution of device activities or the device activity in each frame is perfectly known.

In practice, preamble detection and packet decoding is rarely error-free in large-scale multiple access systems \cite{Ke2020compressive,Zou2020message}, even when device heterogeneity is accounted for \cite{Chetot2021joint,Chetot2023active}. Similarly, advanced NOMA transmissions based on blind signature classification \cite{panAIDrivenBlindSignature2022} also have classification errors. Moreover, the impact of imperfect knowledge of device activities in each frame affects system performance is poorly understood, as well as how to mitigate the impact of activity estimation errors on the optimization of slot allocation probabilities. 

In this paper, we propose a throughput maximization algorithm to optimize slot allocation probabilities in frame slotted ALOHA accounting for imperfect knowledge of device activities. Our main contributions are as follows:
\begin{enumerate}
    \item[(i)] We demonstrate that the algorithms recently proposed in \cite{zhengStochasticResourceOptimization2021,zhengStochasticResourceAllocation2022} can be highly sensitive to activity estimation errors, leading to a suboptimal throughput. This is significant as, in the absence of activity estimation errors, these algorithms achieve a high performance. Moreover, we show that the suboptimality arises from bias in gradient estimates due to estimation errors.
	\item[(ii)] To mitigate the impact of the bias arising from activity estimation errors, we exploit importance weighting in a manner analogous to sample bias correction \cite{zadroznyLearningEvaluatingClassifiers2004} and bias reduction in private synthetic data \cite{ghalebikesabiMitigatingStatisticalBias2022}. We prove that by weighting the gradient estimates in the stochastic optimization algorithm in \cite{zhengStochasticResourceOptimization2021,zhengStochasticResourceAllocation2022}, the slot allocation probabilities converge to a stationary point with probability one. 
	\item[(iii)] In practice, the weight on the gradient estimates requires the evaluation of the true activity distribution and the imperfect activity distribution induced by erroneous activity estimation. As these distributions may be difficult to obtain, we propose heuristic weights that are readily available in practical systems, leading to new stochastic optimization algorithms for the slot allocation probabilities. 
	\item[(iv)] The proposed algorithms are validated by an extensive numerical study under different type of errors. Several models of activity estimation errors are considered, including independent and symmetric or asymmetric errors and a realistic error model based on generalized approximate message passing (GAMP) \cite{Chetot2021joint} arising in pilot-based device identification. Under each model for activity estimation errors, numerical results show that our algorithm is robust to activity estimation errors. Moreover, our algorithm leads to performance improvements over the algorithms in \cite{zhengStochasticResourceOptimization2021,zhengStochasticResourceAllocation2022}, which do not account for activity estimation errors.
\end{enumerate}

\subsection{Organization of the Paper}

The paper is organised as follows. In Section~\ref{sec:model} we detail the system model. In Section~\ref{sec:stochressourecalloc}, we recall the stochastic resource optimization algorithm in \cite{zhengStochasticResourceOptimization2021,zhengStochasticResourceAllocation2022}, which does not account for activity estimation errors. In Section~\ref{sec:bias}, we study the impact of activity estimation errors and develop a new stochastic resource optimization algorithm to mitigate the errors. In Section~\ref{sec:numerical}, we present a numerical study comparing our new algorithm with existing methods that do not account for activity estimation errors. In Section~\ref{sec:conclusion}, we conclude.

\section{System Model}
\label{sec:model}
Consider a network consisting of an access point (AP) equipped with $N_A$ antennas and $N$ devices equipped with a single antenna. As in common cellular standards (e.g., NB-IoT \cite{hoglundOverview3GPPRelease2017} or NR-REDCAP \cite{veeduSmallerLowerCost5G2022}), transmissions by active devices occur in frames via orthogonal frequency division multiplexing (OFDM). Each frame consists of several time-frequency resources. In NB-IoT, a common configuration utilizes a single subcarrier \cite{hoglundOverview3GPPRelease2017}. Hence, for the remainder of this paper, we consider the scenario where each frame consists of a single subcarrier and $K$ time slots \footnote{For NB-IoT, a slot is further divided in 7 symbols \cite{miaoNarrowbandInternetThings2018}.}.

\subsection{Device Activity and Resource Selection}

Within a given frame, each of the $N$ devices are either active or inactive. The activity of device $i$ in frame $t$ is represented by the binary Bernoulli distributed random variable $X^t_i \sim \mathrm{Ber}(p_i)$, where $p_i\in[0,1]$ is the probability that user $i$ is active. That is, $X^t_i \in \{0,1\}$ with $X^t_i = 1$ if device $i$ is active and $X^t_i = 0$ otherwise. The activity vector in frame $t$ is then denoted by $\mathbf{X}^t \in \{0,1\}^N$ and we denote the vector of user activity probabilities by $\mathbf{p} = [p_1,\ldots,p_N]^T\in[0,1]^N$. To simplify notations we write in the remaining of the paper $\mathbf{X}^t\sim\mathbf{p}$ to mean $\mathbf{X}^t\sim\mathrm{Ber}(\mathbf{p})$. We make the following assumptions for $\mathbf{X}^t,~t = 1,2,\ldots$:
\begin{enumerate}
	\item[(i)] $X^t_i$ is independent of $X^t_j,~j \neq i$.
	\item[(ii)] $\mathbf{X}^t$ is independent of $\mathbf{X}^{t'},~t' \neq t,~t \in \mathbb{N}$.
    \item[(iii)] $\mathbf{p}$ is \textit{not} perfectly known to the access point. In the following sections the imperfect knowledge of $\mathbf{p}$ is denoted by $\tilde{\mathbf{p}}$.
\end{enumerate}

In grant-free random access, each active device selects the time-frequency resources utilized for data transmission without coordinating with the access point nor any other devices. In the absence of any coordination, a common policy for resource selection is the classical frame slotted ALOHA protocol \cite{wieselthierExactAnalysisPerformance1989b}. In this case, devices select time-frequency resources uniformly at random. A more general policy that has recently been considered in \cite{zhengStochasticResourceOptimization2021,zhengStochasticResourceAllocation2022} allows devices to select resources with different probabilities. In this policy, the probability that device $i$ selects each resource is given in the $i$-th row of the matrix $\mathbf{A} \in \mathbb{R}^{N\times K}_+$. That is, the elements of the matrix $\mathbf{A}$ are defined by
\begin{align}
	A_{ik} = \mathrm{Pr}(\mathrm{device}~i~\mathrm{selects~resource}~k|X_i = 1)
\end{align}  
As such, $\mathbf{A} \in \mathbb{R}^{N\times K}_+$ and $\sum_{k=1}^K A_{ik} = 1,~i = 1,\ldots,N$, which ensures that each active device transmits using exactly one resource per frame. 

\subsection{Grant-Free Random Access Protocols and Device Identification}

A generic resource selection and data transmission protocol using a pilot-based identification is given in Alg.~\ref{alg:protocol_mac}. Specific protocols vary in how device identification information is communicated to the access point. In the sequel, device identification is critical for optimization of frame slotted ALOHA with heterogeneous device activity distributions. We consider two types of protocols where device identification information is included within the preamble or in the data packet. 

\RestyleAlgo{ruled}
\begin{algorithm}[!ht]
	\caption{Grant-Free Random Access Protocol}
	\label{alg:protocol_mac}
	(Downlink) Sync signal sent by the AP to indicate beginning of the first frame and inform devices about the allocation matrix $\mathbf{A}$.\\
	\While{True}{
		(Uplink) Each active device $i$ sends their preamble $\mathbf{s}_i$ on a common control channel.\\
		(Local at devices) Each active device $i$ randomly selects the slot $k_i$ with probability $A_{ik}$.\\
		(Uplink) Each active device $i$ sends their data on the drawn slot $k_i$.\\
		(Local at AP) AP decodes data present in each slot $k$. \\
		(Local at devices) Wait for beginning of next frame.\\
	}
\end{algorithm}

\subsubsection{Device Identification in the Preamble}\label{sec:preamble}

A common strategy in grant-free random access is to assign each device $i$ with a unique preamble of length $T$, $\mathbf{s}_i \in \mathbb{C}^{T}$. In each frame, the preamble of each active device is then transmitted over the block fading control channel consisting of $T$ symbols. The output of the control channel in frame $t$, $\mathbf{Y}^t \in \mathbb{C}^{T \times N_A}$, is given by 
\begin{align}\label{eq:input_output}
	\mathbf{Y}^t = \mathbf{S}\mathbf{H}^t + \mathbf{W}^t,
\end{align} 
where $\mathbf{S} = [\mathbf{s}_1,\ldots,\mathbf{s}_N]$ is the matrix of preambles, $\mathbf{H}^t \in \mathbb{C}^{N \times N_A}$ is the matrix of channel coefficients for all devices and AP antennas, and $\mathbf{W}^t \in \mathbb{C}^{T \times N_A}$ is additive white Gaussian noise. 

Given the control channel output $\mathbf{Y}^t$, the channel coefficients $\mathbf{H}^t$ and the activities $\mathbf{X}^t$ of the devices can be estimated. This is typically achieved via approximate message passing algorithms such as the generalized approximate message passing (GAMP) algorithm in \cite{Chetot2021joint}. Active devices then transmit their data by randomly selecting a data slot based on the allocation matrix $\mathbf{A}$. Data decoding is carried out by exploiting the estimate of the channels $\hat{\mathbf{H}}^t$ obtained from the approximate message passing algorithm. As the active devices have been identified from the preambles, it is not necessary to include this information within the data packet. 

\subsubsection{Device Identification in the Data}

An alternative approach is to utilize the preambles only for the purpose of channel estimation. In this case, the protocol proceeds in the same fashion as in Sec.~\ref{sec:preamble}; however, the control channel output $\mathbf{Y}^t$ is only used to estimate the channel coefficients, $\hat{\mathbf{H}}^t$, of the active users. 

As in Sec.~\ref{sec:preamble}, active devices then transmit their data by randomly selecting a data slot based on the allocation matrix $\mathbf{A}$. The key difference is that data decoding reveals both the data of the active devices and also their identification. As a consequence, the activity estimate $\hat{\mathbf{X}}^t$ is only obtained after data decoding.

\subsection{Feedback of the Allocation Matrix $\mathbf{A}$}

The grant-free random access protocol in Alg.~\ref{alg:protocol_mac} does not require a multi-step handshake procedure as is the case for the RACH procedure. However, it is necessary to inform devices of the allocation matrix $\mathbf{A}$ in order to select the resource to be used in each frame. We assume that the activity statistics of the devices typically do not change significantly over a large number of frames. As a consequence, it is only necessary to communicate the matrix $\mathbf{A}$ to the devices every $F$ frames, where the choice $F$ accounts for downlink usage constraints. 

On the other hand, the AP obtains new information about the activity of each device every frame. This information can be utilized to update the matrix $\mathbf{A}$ every frame at the AP, as is discussed in the following section. We emphasize that the AP \textit{does not} communicate the updated allocation matrix every frame.    

\section{Stochastic Resource Allocation Problem}
\label{sec:stochressourecalloc}
The stochastic resource allocation protocol described in Alg.~\ref{alg:protocol_mac} is common in frame slotted ALOHA systems and their variants. For example, in the simplest form of ALOHA, an equal probability is assigned to each time slot for all devices. In IRSA and CSA, the slot or code allocation is also randomized \cite{livaGraphBasedAnalysisOptimization2011,paoliniCodedSlottedALOHA2015}. However, these protocols have been designed under the assumption of an equal activity probability for all devices. 

In this section, we formalize the problem of stochastic resource allocation in frame slotted ALOHA for devices with \textit{heterogeneous activity probabilities}. We present an algorithm based on \cite{zhengStochasticResourceOptimization2021} to optimize the allocation \textit{under the assumption of perfect knowledge of the activities $\mathbf{X}^t$} for the throughput objective. A new algorithm to mitigate the impact of imperfect knowledge of $\mathbf{X}^t$ due to preamble detection or data decoding errors is developed in Sec.~\ref{sec:bias}.

\subsection{Objective}

In frame slotted ALOHA, a key performance metric is the expected number of collision-free transmissions in each frame, often known in this context as the throughput. Given an allocation matrix $\mathbf{A}$ and activity vector $\mathbf{X}$, the instantaneous throughput for device $i$ is given by
\begin{align}\label{eq:T_i}
	T_i(\mathbf{A};\mathbf{X}) = \sum_{k = 1}^K X_iA_{ik} \prod_{\substack{m = 1\\m \neq i}}^N (1 - X_mA_{mk}),
\end{align}
and the total instantaneous throughput by
\begin{align}
	T^{\mathbf{X}}(\mathbf{A}) = \sum_{i=1}^N T_i(\mathbf{A};\mathbf{X}).
\end{align}
After averaging over the activity vector $\mathbf{X}$, the throughput is then
\begin{align}
    T(\mathbf{A};\mathbf{p}) = \mathbb{E}_{\mathbf{X}\sim\mathbf{p}}[T^{\mathbf{X}}(\mathbf{A})].
\end{align}
In the case that the activity of each device is independent, the throughput is simplified as
\begin{align}
     T(\mathbf{A};\mathbf{p})= \sum_{i=1}^N \sum_{k=1}^K p_iA_{ik} \prod_{\substack{m = 1\\ m \neq i}}^N (1 - p_mA_{mk}).
\end{align}
In order to compare the performance of the algorithms in scenarios with different activity probability, we define the normalized throughput as 
\begin{align*}
T^N(\mathbf{A};\mathbf{p})=\frac{1}{\sum_{i=1}^Np_i}T(\mathbf{A};\mathbf{p}).
\end{align*}

\subsection{Stochastic Optimization Problem}
An optimal allocation $\mathbf{A}^*$ is a solution to the stochastic optimization problem 
\begin{align}\label{eq:opt_prob}
	\mathbf{A}^* \in \arg\max_{\substack{\mathbf{A} \in \mathbb{R}_+^{N \times K}:\\ \sum_{k=1}^K A_{ik} = 1,~i = 1,\ldots,N}} T(\mathbf{A};\mathbf{p}).
\end{align}   
In general, the objective $T(\mathbf{A};\mathbf{p})$ is non-convex in $\mathbf{A}$. Moreover, solutions $\mathbf{A}^*$ may often admit non-binary solutions, particularly when $p_i < \frac{1}{2},~i = 1,\ldots,N$.

If the distribution $\mathbf{p}$ is known to the AP, it is in principle possible to obtain solutions to (\ref{eq:opt_prob}) via gradient descent. However, this approach has a very high complexity due to the fact that the throughput has $2^N$ terms. Indeed, 
\begin{align}
	T(\mathbf{A};\mathbf{p}) = \sum_{\mathbf{x} \in \{0,1\}^N} \mathrm{Pr}(\mathbf{X=x})T^{\mathbf{X}}(\mathbf{A}).
\end{align}

In practice, only limited knowledge of $\mathbf{p}$ is available as it must be estimated based on the activity estimates $\hat{\mathbf{X}}^t$ obtained from the preamble or data decoding. As the activity estimates are error prone, $\hat{\mathbf{X}^t}$ can be considered as being drawn from another distribution $\hat{\mathbf{p}}$. An alternative approach is to directly utilize the activity estimates $\hat{\mathbf{X}}^t$ to solve the allocation problem in (\ref{eq:opt_prob}). As observed in \cite{zhengStochasticResourceOptimization2021}, this can be achieved via projected stochastic gradient ascent. 

Suppose, for the moment, that the activity vectors $\mathbf{X}^1,\mathbf{X}^2,\ldots$ for each frame are perfectly known to the AP. The projected stochastic gradient ascent algorithm (summarized in Alg.~\ref{alg:SGA_no_bias}) updates the allocation matrix $\mathbf{A}$ each frame via the recursion 
\begin{align}
	\mathbf{A}^{t+1} = \Pi_{\mathcal{H}}\left\{\mathbf{A}^t + \gamma^{t+1}g(\mathbf{A}^t;\mathbf{X}^{t+1})\right\},
\end{align}
where $(\epsilon^t)$ is a positive step size sequence,  $\Pi_{\mathcal{H}}$ denotes the Euclidean projection on the constraint set 
\begin{align}
	\mathcal{H} = \{\mathbf{A} \in \mathbb{R}^{N \times K}_+: \sum_{k=1}^K A_{ik} = 1,~i = 1,\ldots,N\},
\end{align}
and $g(\mathbf{A}^t;\mathbf{X}^{t+1})$ is a stochastic gradient estimate based on the activity vector $\mathbf{X}^{t+1}$. In particular, the gradient estimate for the throughput objective is given by (the superscript $t$ is dropped for the shake of clarity).
\begin{align}
	g(\mathbf{A};\mathbf{X})&=\sum_{n=1}^Ng_n(\mathbf{A};\mathbf{X})\label{eq:gradient_as_sum},
\end{align}
\begin{align}
    g(\mathbf{A};\mathbf{X})_{ql}&=\sum_{n=1}^N g_n(\mathbf{A};\mathbf{X})_{ql}\nonumber\\
	&= X_q\prod_{\substack{m=1 \\ m\neq q}}^N\left(1-X_mA_{ml}\right)\notag\\
	&~~~-\sum_{\substack{n=1\\n\neq q}}^NX_qX_nA_{nl}\prod_{\substack{m=1\\m\neq n\\m\neq q}}^N(1-X_mA_{ml})
\end{align}
where
\begin{align*}
	g_n(\mathbf{A};\mathbf{X})_{ql} = 
	\begin{cases}
		X_q\prod_{\substack{m=1 \\ m\neq q}}^N\left(1-X_mA_{ml}\right)& \mathrm{if~} q=n\\
		-X_qX_nA_{nl}\prod_{\substack{m=1\\m\neq n\\m\neq q}}^N(1-X_mA_{ml})&\mathrm{if~} q\neq n
	\end{cases}
\end{align*}
Note that in this case, $g(\mathbf{A};\mathbf{X})\in\mathbb{R}^{N\times K}$ is an unbiased estimate of $\nabla_{\mathbf{A}^t} T(\mathbf{A};\mathbf{p})=\mathbb{E}_{\mathbf{X}\sim\mathbf{p}}[g(\mathbf{A};\mathbf{X})]$ due to the absence of user identification errors. As such, under appropriate conditions on the step size sequence $\{\gamma^t\}$ and perfect knowledge of $\mathbf{X}^1,\mathbf{X}^2,\ldots$, the iterates $\{\mathbf{A}^t\}$ converge almost surely to a stationary point of $T(\cdot)$ \cite{zhengStochasticResourceOptimization2021}.

\begin{algorithm}[!ht]
	\caption{Stochastic optimization algorithm when user identification is error-free.}
	\label{alg:SGA_no_bias}
	Choose initial allocation matrix $\mathbf{A}^1 \in \mathbb{R}^{N \times K}$ such that $\sum_{k=1}^K A_{n,k}^1 = 1,~n = 1,\ldots,N$, and step-size sequence $\{\gamma^t\}$ with $\gamma^t > 0,~t = 1,2,\ldots$\\
	$t \leftarrow 1$.\\
	\While {not converged}{
        Based on $\mathbf{X}^{t+1}\sim\mathbf{p}$, compute an unbiased estimate $g(\mathbf{A}^t;\mathbf{X}^{t+1})$ of $\nabla_{\mathbf{A}^t} T(\mathbf{A};\mathbf{p})$\\
		$\mathbf{A}^{t+1}\leftarrow\Pi_\mathcal{H}[\mathbf{A}^t+\gamma^{t+1}g(\mathbf{A}^t;\mathbf{X}^{t+1})]$\\
		$t\leftarrow t+1$}
\end{algorithm}


In practice, however, perfect knowledge of $\mathbf{X}^1,\mathbf{X}^2,\ldots$ is not available to the AP due to errors in device identification. As we discuss in the following section, this can lead to a significant degradation in performance. 

\section{Mitigating Errors in Activity Estimation}
\label{sec:bias} 

\subsection{Impact of Activity Estimation Error}

As noted in the previous section, perfect knowledge of device activities $\mathbf{X}^1,\mathbf{X}^2,\ldots$ is not available to the AP. A key question is then the impact on the performance of Alg.~\ref{alg:SGA_no_bias}. In fact, errors in the activity estimates can lead to a significant performance degradation, even in small networks as illustrated in the following example.

\begin{exmp}
	\label{exmp:userdetection_errors}
	Consider a network consisting of $3$ devices sharing two time slots. The true activity probabilities of the devices are $\mathbf{p}=\begin{bmatrix}0.3&0.4&0.9\end{bmatrix}$. Suppose that the user identification algorithm is not always able to distinguish between the first and last users. In particular, with probability $\epsilon$, the estimated user activity vector available to the AP is in fact governed by $\hat{\mathbf{p}}=\begin{bmatrix}0.9&0.4&0.3\end{bmatrix}$. The observed distribution is thus $\mathbf{p}^\prime=(1-\epsilon)\mathbf{p}+\epsilon\hat{\mathbf{p}}$.
	
	Fig.~\ref{fig:exmp} shows the impact of $\epsilon$ on the network throughput $T(\mathbf{A})$ after optimizing $\mathbf{A}$ with Alg.~\ref{alg:SGA_no_bias} for 500 frames. The throughput of standard frame slotted ALOHA is also plotted, where every element of the allocation matrix $\mathbf{A}$ is $\frac{1}{2}$). The case $\epsilon=0$ corresponds to the case of perfect user identification and $\epsilon=1$ to the case where the activity vector is always drawn from $\hat{\mathbf{p}}$. 
	
	Observe that as $\epsilon$ increases (corresponding to a higher probability of errors in the estimation of $\mathbf{X}$), the throughput significantly decreases compared with the baseline where $\mathbf{X}$ has been perfectly estimated. If the average number of errors is too large, then the resulting throughput is worse than ALOHA.
	
        \begin{figure}[h!]
		\centering
		\includegraphics[width=\linewidth]{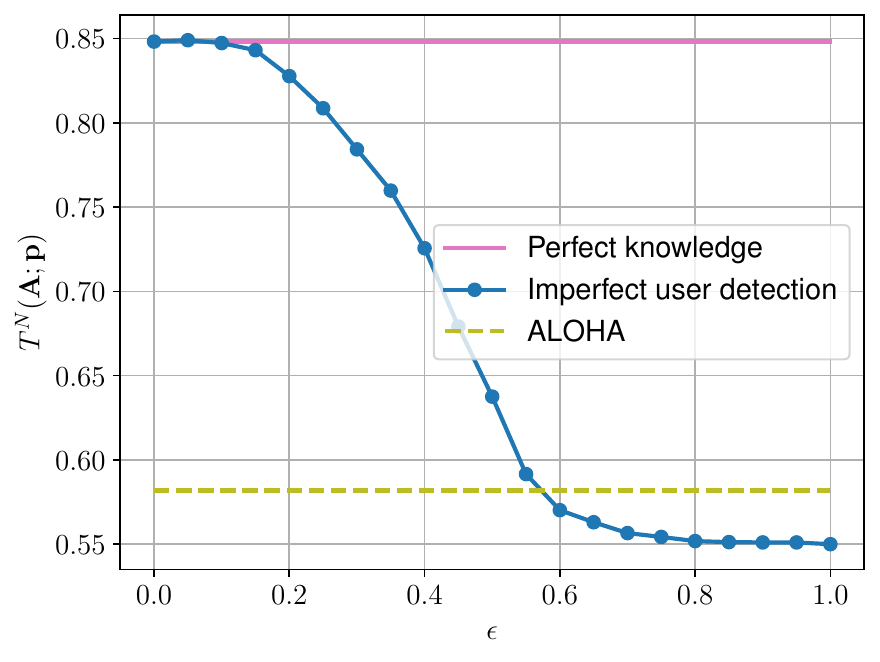}
		\caption{Expected throughput with user identification errors for a network of $3$ devices sharing two slots. Baseline, in blue, corresponds to $\epsilon=0$. Drawing samples $\mathbf{X}_t$ with probability $\epsilon$ from $\hat{\mathbf{p}}$ instead of $\mathbf{p}$ (in orange) can reduce the throughput by up to $35\%$.}
		\label{fig:exmp} 
	\end{figure}
\end{exmp}

\subsection{Unbiasing Gradient Estimates}

In order to develop an algorithm to mitigate the impact of user identification errors, we first examine the impact of the errors on the SGA algorithm in Alg.~\ref{alg:SGA_no_bias}. To this end, consider the update rule 
\begin{align}
	\mathbf{A}^{t+1} &= \Pi\left\{\mathbf{A}^t + \gamma^{t+1}g(\mathbf{A}^t;\mathbf{X}^{t+1})\right\}\notag\\
	&= \mathbf{A}^t + \gamma^{t+1}g(\mathbf{A}^{t};\hat{\mathbf{X}}^{t+1}) + \gamma^{t+1}\mathbf{Z}^{t+1},
\end{align}
where $\gamma^{t+1}\mathbf{Z}^{t+1}$ is the smallest vector (w.r.t the Euclidean norm) needed to project $\mathbf{A}^t + \gamma^{t+1}g(\mathbf{A}^{t};\hat{\mathbf{X}}^{t+1})$ into the constraint set $\mathcal{H}$. We then have
\begin{align}
\mathbf{A}^{t+1} &= \mathbf{A}^t + \gamma^{t+1}\left(\mathbb{E}_{\mathbf{X}^t\sim{\mathbf{p}}}[g(\mathbf{A}^t;\mathbf{X}^{t+1})] + \beta^{t+1} + \partial \mathbf{M}^{t+1}\right)\notag\\
	&~~~ + \gamma^{t+1}\mathbf{Z}^{t+1},
\end{align}
where 
\begin{align}
    \beta^{t+1} &= \mathbb{E}_{\hat{\mathbf{X}}{\sim\hat{\mathbf{p}}}}[g(\mathbf{A}^t;\hat{\mathbf{X}}^{t+1})] - \mathbb{E}_{\mathbf{X}\sim\mathbf{p}}[g(\mathbf{A}^t;\mathbf{X}^{t+1})]\notag\\
	\partial \mathbf{M}^{t+1} &= g(\mathbf{A}^t;\hat{\mathbf{X}}^{t+1}) - \beta^{t+1} - \mathbb{E}_{\mathbf{X}\sim\mathbf{p}}[g(\mathbf{A}^t;\mathbf{X}^{t+1})].
\end{align}
Note that $\mathbb{E}_{t+1}[\partial \tilde{\mathbf{M}}^{t+1}] = 0$, where $\mathbb{E}_{t+1}[\cdot]$ is the conditional expectation on with respect to the $\sigma$-algebra generated by the iterates up to time $t$. 

In the case that there are no user identification errors, $\beta^{t+1} = 0$, $g(\mathbf{A}^t;\mathbf{X}^{t+1})$ is an unbiased estimate of the gradient. Moreover, Alg.~\ref{alg:SGA_no_bias} is guaranteed to converge to a stationary point under the hypotheses in Theorem~5.2.10 of \cite{kushnerStochasticApproximationRecursive2003}. However, when $\beta^{t+1} \neq 0$, the gradient estimate is biased. As a consequence, there is no guarantee of convergence, even to a stationary point. The resulting performance reduction is illustrated in Fig.~\ref{fig:exmp}.

\subsection{Proposed Algorithm}

As the cause of the performance reduction is due to biased gradient estimates, it is necessary to utilize a bias reduction method. Bias reduction algorithms have been exploited in various machine learning problems; e.g., sample selection bias correction \cite{zadroznyLearningEvaluatingClassifiers2004} and privacy \cite{ghalebikesabiMitigatingStatisticalBias2022}. We now adapt these methods to our problem. 

Consider the weighting function $w:\mathbf{x} \mapsto w(\mathbf{x})$ for $\mathbf{x} \in \{0,1\}^N$ consisting of the ratio between a target distribution and a proposal distribution, it is defined by 
\begin{align}\label{eq:weight}
	w(\mathbf{x}) = \frac{\mathrm{Pr}(\mathbf{X} = \mathbf{x})}{\mathrm{Pr}(\hat{\mathbf{X}} = \mathbf{x})},
\end{align}
where $\mathbf{X}$ is the true activity vector and $\hat{\mathbf{X}}$ is the estimated activity vector. Suppose that $w(\mathbf{x}) < \infty,~\mathbf{x} \in \{0,1\}^N$ where we adopt the convention that $\frac{0}{0} = 0$. For the purposes of bias reduction, a key property of $w(\mathbf{x})$ is then that for all $\mathbf{A} \in \mathcal{H}$, 
\begin{align}\label{eq:IS_identity}
    \mathbb{E}_{\mathbf{X}\sim\mathbf{p}}[g(\mathbf{A};\mathbf{X})] &= \sum_{\mathbf{x} \in \{0,1\}^N} g(\mathbf{A};\mathbf{x})\mathrm{Pr}(\mathbf{X} = \mathbf{x})\notag\\
	&= \sum_{\mathbf{x} \in \{0,1\}^N} g(\mathbf{A};\mathbf{x})w(\mathbf{x})\mathrm{Pr}(\hat{\mathbf{X}} = \mathbf{x})\notag\\
    &= \mathbb{E}_{\hat{\mathbf{X}}\sim\hat{\mathbf{p}}}[w(\hat{\mathbf{X}})g(\mathbf{A};\hat{\mathbf{X}})].
\end{align} 
In other words, $w(\hat{\mathbf{X}})g(\mathbf{A};\hat{\mathbf{X}})$ is an unbiased estimate of $\mathbb{E}_{\mathbf{X}\sim\mathbf{p}}[g(\mathbf{A};\mathbf{X})]$ as long as $w(\mathbf{x}) < \infty$ for all $\mathbf{x}$ such that $\mathrm{Pr}(\mathbf{X} = \mathbf{x}) > 0$. As such, as we rigorously establish in Theorem~\ref{thrm:convergence}, this choice of weight overcomes the key problem preventing convergence of Alg.~\ref{alg:SGA_no_bias}.

\begin{algorithm}[!ht]
	\caption{Stochastic optimization algorithm with user identification errors}
	\label{alg:SGA_bias_reduced}
	Choose initial allocation matrix $\mathbf{A}^1 \in \mathbb{R}^{N \times K}$ such that $\sum_{k=1}^K A_{n,k}^1 = 1,~n = 1,\ldots,N$, and step-size sequence $\{\gamma^t\}$ with $\gamma^t > 0,~t = 1,2,\ldots$\\
	$t \leftarrow 1$.\\
	\While {not converged}{
        Based on the estimate $\hat{\mathbf{X}}^{t+1}\sim\hat{\mathbf{p}}$, compute a biased estimate $g(\mathbf{A}^t;\hat{\mathbf{X}}^{t+1})$ of $\nabla_{\mathbf{A}^t} T(\mathbf{A};\mathbf{p})$\\
		$\mathbf{A}^{t+1}=\Pi_\mathcal{H}[\mathbf{A}^t+\gamma^{t+1}w( \hat{\mathbf{X}}^{t+1})g(\mathbf{A}^t;\hat{\mathbf{X}}^{t+1})]\label{stp:update_weight}$\\
		$t\leftarrow t+1$}
\end{algorithm}

Incorporating the weight $w(\mathbf{x})$ in (\ref{eq:weight}) into Alg.~\ref{alg:SGA_no_bias} leads to Alg.~\ref{alg:SGA_bias_reduced}. This algorithm has the following convergence guarantee. 

\begin{theorem}\label{thrm:convergence}
	The iterates $\mathbf{A}^t$ of Alg.~\ref{alg:SGA_bias_reduced} converge almost surely as $t \rightarrow \infty$ to a stationary point provided that the step size sequence $\{\gamma^t\}$ satisfies
	\begin{enumerate}
		\item[(i)] $\gamma^t > 0,~t > 0$;
		\item[(ii)] $\sum_{t=1}^{\infty} \gamma^t = \infty$;
		\item[(iii)] $\sum_{t=1}^{\infty} (\gamma^t)^2 < \infty$,
	\end{enumerate}
	and $w(\mathbf{x}) < \infty$ for all $\mathbf{x} \in \{0,1\}^N$. 
\end{theorem}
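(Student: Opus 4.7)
My plan is to reduce the proof to an application of Theorem~5.2.10 of \cite{kushnerStochasticApproximationRecursive2003}, the same result that the paper uses to establish convergence of Algorithm~\ref{alg:SGA_no_bias} in the unbiased case. The whole point of the importance weight $w$ is to restore unbiasedness of the stochastic direction, so once the importance-sampling identity (\ref{eq:IS_identity}) is in hand, the remaining work is to check the compactness, smoothness, noise-moment and step-size hypotheses required by the Kushner--Yin framework.

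First, I would rewrite the update of Algorithm~\ref{alg:SGA_bias_reduced} in the standard projected stochastic-approximation form
\begin{align}
\mathbf{A}^{t+1} = \mathbf{A}^t + \gamma^{t+1}\nabla_{\mathbf{A}} T(\mathbf{A}^t;\mathbf{p}) + \gamma^{t+1}\partial\mathbf{M}^{t+1} + \gamma^{t+1}\mathbf{Z}^{t+1},
\end{align}
where $\gamma^{t+1}\mathbf{Z}^{t+1}$ is the minimal Euclidean correction that returns the iterate to $\mathcal{H}$ and
\begin{align}
\partial\mathbf{M}^{t+1} = w(\hat{\mathbf{X}}^{t+1})g(\mathbf{A}^t;\hat{\mathbf{X}}^{t+1}) - \nabla_{\mathbf{A}} T(\mathbf{A}^t;\mathbf{p}).
\end{align}
Letting $\mathcal{F}_t = \sigma(\mathbf{A}^1,\hat{\mathbf{X}}^1,\ldots,\hat{\mathbf{X}}^t)$, I would apply (\ref{eq:IS_identity}) with $\mathbf{A}=\mathbf{A}^t$ (which is $\mathcal{F}_t$-measurable), obtaining
$\mathbb{E}[w(\hat{\mathbf{X}}^{t+1})g(\mathbf{A}^t;\hat{\mathbf{X}}^{t+1})\mid\mathcal{F}_t] = \nabla_{\mathbf{A}} T(\mathbf{A}^t;\mathbf{p})$, so $(\partial\mathbf{M}^{t+1})$ is a martingale difference sequence with respect to $(\mathcal{F}_t)$. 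Next I would verify the regularity requirements: the constraint set $\mathcal{H}$ is a Cartesian product of $N$ probability simplices, hence a nonempty compact convex polytope; the map $\mathbf{A}\mapsto g(\mathbf{A};\mathbf{x})$ is a polynomial on $\mathcal{H}$ and is therefore bounded, uniformly over $\mathcal{H}$ and over the finite alphabet $\{0,1\}^N$; combined with the hypothesis $w(\mathbf{x})<\infty$ over the finite set $\{0,1\}^N$, which yields a uniform bound $\sup_{\mathbf{x}} w(\mathbf{x}) \le C <\infty$, this gives
\begin{align}
\sup_t\, \mathbb{E}\!\left[\|\partial\mathbf{M}^{t+1}\|^2 \,\middle|\, \mathcal{F}_t\right] < \infty \quad \text{a.s.}
\end{align}
The objective $T(\cdot;\mathbf{p})$ is polynomial and hence continuously differentiable with a Lipschitz gradient on $\mathcal{H}$. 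Together with (i)--(iii), these verify the hypotheses of Theorem~5.2.10 of \cite{kushnerStochasticApproximationRecursive2003}, from which the iterates $\mathbf{A}^t$ converge almost surely to the invariant set of the projected mean-field ODE $\dot{\mathbf{A}} = \nabla_{\mathbf{A}} T(\mathbf{A};\mathbf{p}) + \mathbf{z}$, i.e., to a Karush--Kuhn--Tucker point of (\ref{eq:opt_prob}).

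The main obstacle really is the uniform second-moment bound on the noise. If some activity pattern $\mathbf{x}$ with $\mathrm{Pr}(\mathbf{X}=\mathbf{x})>0$ were never produced by the estimator, its weight $w(\mathbf{x})$ would be infinite and $\partial\mathbf{M}^{t+1}$ would fail to have a second moment, invalidating the stochastic-approximation machinery. The assumption $w(\mathbf{x})<\infty$ for every $\mathbf{x}$ is precisely what rules out this pathology, and the finiteness of $\{0,1\}^N$ is what converts this pointwise finiteness into the uniform bound required by Kushner--Yin. Everything else in the argument is routine once unbiasedness and this moment bound have been secured.
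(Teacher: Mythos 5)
Your proposal is correct and follows essentially the same route as the paper's own proof: both rewrite the update as mean gradient plus a martingale-difference noise term plus a projection correction, invoke the importance-sampling identity (\ref{eq:IS_identity}) to restore unbiasedness, and then verify the hypotheses of the Kushner--Yin projected stochastic-approximation theorem (boundedness of the weighted gradient via $w(\mathbf{x})<\infty$ on the finite alphabet, continuity of $\nabla_{\mathbf{A}}T$, and the step-size conditions). Your write-up is if anything slightly more careful than the paper's, since you make the conditional second-moment bound and the compactness of $\mathcal{H}$ explicit rather than leaving them implicit.
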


\begin{proof}
	See Appendix~\ref{app:proof}.
\end{proof}

\subsection{Computing the Importance Weight}
\label{sec:computing_IW}
Theorem~\ref{thrm:convergence} shows that even when there are errors in $\hat{\mathbf{X}}$, Alg.~\ref{alg:SGA_bias_reduced} converges to a stationary point for the problems in (\ref{eq:opt_prob}). This is in contrast to the standard SGA algorithm in Alg.~\ref{alg:SGA_no_bias}, for which these guarantees hold only when there are no errors. 

However, the implementation of Alg.~\ref{alg:SGA_bias_reduced} requires knowledge of both $\mathbf{p}$ and $\hat{\mathbf{p}}$ in order to compute the weight $w(\mathbf{x})$ in (\ref{eq:weight}). In practice, the AP does not have perfect knowledge of $\mathbf{p}$ nor $\hat{\mathbf{p}}$. Nevertheless, estimates can be obtained as follows:
\begin{enumerate}
    \item[(i)] Estimation of the target distribution $\mathbf{p}$:
        The target distribution corresponds to the prior distribution required by the GAMP algorithm. Knowledge of the prior is obtained via expectation-maximization algorithms (see [36] for details on GAMP and estimation of the prior). Another possibility is that device can periodically feedback estimates of activity probabilities.
        We model the resulting estimated distribution, accounting for errors, by adding a Gaussian perturbation to the true distribution, $\tilde{\mathbf{p}}=\mathbf{p}+\bm{\eta}$, with $\bm{\eta}\sim\mathcal{N}(0,\sigma^2I)$, where $\tilde{\mathbf{p}}$ is clipped, if needed, to be between 0 and 1. We denote $\tilde{\mathbf{X}}\sim\mathrm{Ber}(\tilde{\mathbf{p}})$ and $\tilde{w}(\mathbf{x}) = \frac{\mathrm{Pr}(\tilde{\mathbf{X}} = \mathbf{x})}{\mathrm{Pr}(\hat{\mathbf{X}} = \mathbf{x})}$ the estimated importance weight.
	\item[(ii)] Estimation of the proposal distribution $\hat{\mathbf{p}}$: the estimation of $\hat{\mathbf{p}}$ can be achieved by empirical distribution estimation via the outputs of the user identification algorithm (e.g., GAMP). 
\end{enumerate}
As a consequence, the weight $w(\mathbf{x})$ in (\ref{eq:weight}) is not perfectly known in practical systems. Nevertheless, as we show in the following section, an imperfect estimate of the weight $w(\mathbf{x})$ still yields improved performance over Alg.~\ref{alg:SGA_no_bias}, which ignored the impact of user identification errors. \\
Moreover, if $\mathrm{Pr}(\hat{\mathbf{X}}=\mathbf{x})$ is close to 0, $w(\mathbf{x})$ will be very large. This mean that the optimization algorithm can take very large steps, pushing the optimization towards undesired local maxima. To increase the stability of the algorithm, we propose Alg. \ref{alg:SGA_bias_reduced_imperfect_w}, in which we clip the weight of iteration $t$ to a maximal value of $\kappa^t > 0$.

\begin{algorithm}[!ht]
	\caption{Stochastic optimization algorithm with user identification errors, approximate target distribution and weight clipping.}
	\label{alg:SGA_bias_reduced_imperfect_w}
	Choose initial allocation matrix $\mathbf{A}^1 \in \mathbb{R}^{N \times K}$ such that $\sum_{k=1}^K A_{n,k}^1 = 1,~n = 1,\ldots,N$, clipping parameter sequence $\kappa^t > 0$, and step-size sequence $\{\gamma^t\}$ with $\gamma^t > 0,~t = 1,2,\ldots$\\
	$t \leftarrow 1$.\\
	\While {not converged}{
        Based on the estimate $\hat{\mathbf{X}}^{t+1}\sim\hat{\mathbf{p}}$, compute a biased estimate $g(\mathbf{A}^t;\hat{\mathbf{X}}^{t+1})$ of $\nabla_{\mathbf{A}^t} T(\mathbf{A};\mathbf{p})$\\
		$\tilde{w}(\hat{\mathbf{X}}^{t+1})=\min\left\{\kappa^t,\frac{\mathrm{Pr}(\tilde{\mathbf{X}} = \mathbf{x})}{\mathrm{Pr}(\hat{\mathbf{X}} = \mathbf{x})}\right\}$\\
		$\mathbf{A}^{t+1}=\Pi_\mathcal{H}[\mathbf{A}^t+\gamma^{t+1}\tilde{w}(\hat{\mathbf{X}}^{t+1})g(\mathbf{A}^t;\hat{\mathbf{X}}^{t+1})]$\\
		$t\leftarrow t+1$}
\end{algorithm}

\section{Numerical Results}
\label{sec:numerical}

\subsection{Parameters and Baseline Methods}
To support our theoretical analysis, we provide simulation results for different type of errors introduced by the user detection algorithms. We compare the throughput given by several methods: 
\begin{itemize}
    \item Optimization of $\mathbf{A}$ using Alg.~\ref{alg:SGA_no_bias} with perfect detection of users, as baseline, where the true activity vector $\mathbf{X}$ is known,
    \item Optimization of $\mathbf{A}$ using Alg.~\ref{alg:SGA_no_bias} with imperfect user detection representing the optimization obtained when using the error prone activity vector $\hat{\mathbf{X}}$,
    \item Optimization of $\mathbf{A}$ using Alg.~\ref{alg:SGA_bias_reduced_imperfect_w} with true weight $w$ for Fig.~\ref{fig:bsc} and Fig.~\ref{fig:bec} (with true target $\mathbf{p}$ empirically estimated proposal $\hat{\mathbf{p}}$ for Fig.~\ref{fig:gamp})
    \item Optimization of $\mathbf{A}$ using Alg.~\ref{alg:SGA_bias_reduced_imperfect_w} with an imperfect weight due to a target distribution $\tilde{\mathbf{p}}$ that is known up to a Gaussian perturbation of different standard deviations $\sigma$. 
    \item A greedy allocation $\mathbf{A}_h$ obtained via Alg.~\ref{alg:greedy}: the users that are the most likely to transmit are allocated their own slot, while all the others users share a single slot. 
    
	\begin{algorithm}[!ht]
	    \caption{Construction of Greedy Allocation Matrix $\mathbf{A}_h$}
		\label{alg:greedy}
		Given a sorted vector of activity probabilities $\mathbf{p}$ in the network and $K$ slots\\
		$k \leftarrow 1$\\
        $\mathbf{A}\leftarrow \bm{0}_{N\times K}$\\
		\For{$i\leftarrow 1;~i\leq N-K;~i\leftarrow i+1$}{
		    $A_{i,1}\leftarrow 1$
		}
		\For{$k\leftarrow 2;~k\leq K;~k\leftarrow k+1$}{
		    $A_{N-k+1,k}\leftarrow 1$
      }
	\end{algorithm}
	
    \item The Frame Slotted ALOHA allocation, a constant matrix having $A_{ik}=\frac{1}{K}$ as elements,
    \item $\mathbf{A}^0$ represents the initial random matrix that was used. 
\end{itemize}
In the following figures, we plot the normalized throughput, corresponding to the expected throughput per frame of the optimized matrix $\mathbf{A}$ for a given network of $N$ users with activity probability $p_i$. It should be noted that this policy is implemented for a large number of frames and thus a small difference in the per-frame throughput can lead to significant differences in the number of successfully decoded packets over long time periods.\\ 
We test our methods on two different models of user detection (with symmetric and asymmetric errors) and with an active user detection algorithm based on GAMP \cite{Rangan2011generalized}. For any type of user detection algorithms, the choice of the initial matrix $\mathbf{A}_0$ is difficult; the problem (\ref{eq:opt_prob}) is non-convex and typically has local maxima with poor performance. As the diameter of the constraint set $\mathcal{H}$ is large, a badly chosen initial matrix might lead to poorly performing local maxima. To overcome this issue we assume that the access point can compute the network throughput for a given matrix. As such, the AP can keep track of several initial matrices and choose to send in the downlink the best one. For our simulations, 12 initial matrices $\mathbf{A}^0$ are drawn uniformly at random in $\mathcal{H}$.\\
The simulated networks consist of $N=20$ users and $K=5$ slots, consistent with the number of slots in an NB-IoT frame for the 3.75kHz subcarrier spacing \cite{miaoNarrowbandInternetThings2018}. We run the different methods for 10000 frames. We should note that in this scenario of NB-IoT, the duration of a slot is 2ms and a frame 10ms \cite{miaoNarrowbandInternetThings2018}, thus 10000 frames represent 100 seconds. Each method is simulated using the same random sequence of activity vectors and we repeat the simulations 20 times, allowing to isolate the effect of the detection errors on the resulting allocation.   
For the weight $\tilde{w}$ in Alg.~\ref{alg:SGA_bias_reduced_imperfect_w}, a new target distribution $\tilde{\mathbf{p}}$ is drawn for each run according to the description in Section \ref{sec:computing_IW}.

The step size and the maximum weight in the different algorithms are set constant across all simulations, frames and type of channels with value $\gamma^t=\frac{1}{100}$ and $\kappa^t=5$. In our following figures, the shaded area represent $\pm$ the standard deviation. 

\subsection{Symmetric Errors}
\label{sec:num:sym}
Consider a detection algorithm where the false alarm and miss-detection probabilities for each device are equal and given to $p_{\mathrm{flip}} \in [0,0.5]$. In other words, if device $i$ is active, the probability it is not detected is $p_{\mathrm{flip}}$. Similarly, if device $i$ is not active, it is detected with probability $p_{\mathrm{flip}}$. In this case, 
\begin{align}
	\hat{p}_i = p_i + p_{\mathrm{flip}} - 2p_{\mathrm{flip}}p_i,~i = 1,\ldots,N.
\end{align}

\begin{figure}[h!]
\centering
\includegraphics[width=\linewidth]{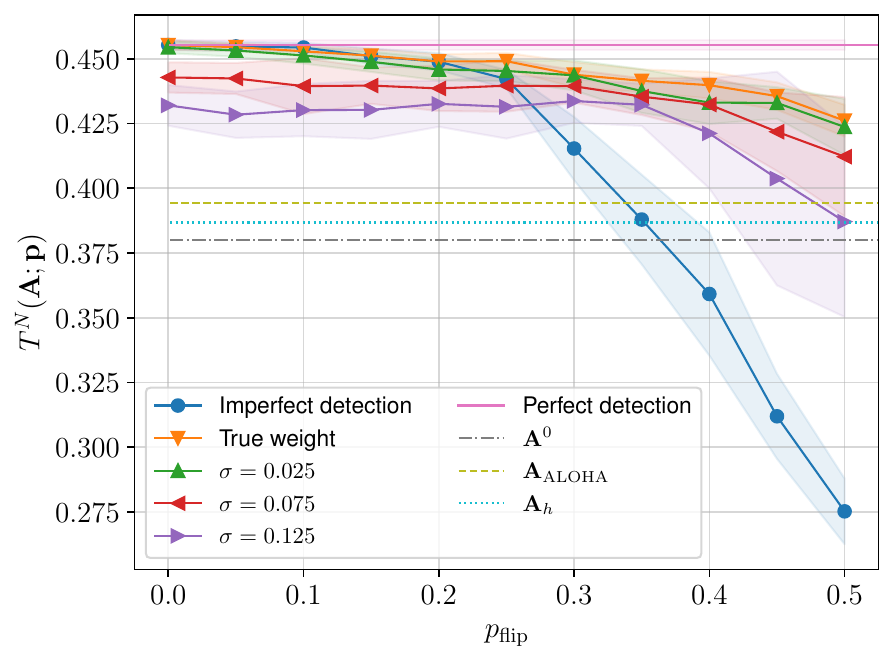}
\caption{Resulting throughput after 10000 frames for different values of $p_{\mathrm{flip}}$.}
    \label{fig:bsc}
\end{figure}

Fig.~\ref{fig:bsc} shows the impact of the error probability $p_{\mathrm{flip}}$ on the throughput achieved by each algorithm. The activity probability of each device $i$, $p_i$, is drawn once independently from the uniform distribution $\mathrm{Unif}[0,0.45]$ and kept constant across the different runs. 

Observe in Fig.~\ref{fig:bsc} that $\mathbf{A}_h$ (in dashed blue) has the worst performance. This is due to the fact that the activity probabilities are low and allocating only a single device to a slot is inefficient. The random initial allocation $\mathbf{A}_0$ and the ALOHA allocation $\mathbf{A}_{\mathrm{ALOHA}}$ are poorly performing. This is due to the fact that neither account for the throughput objective in (\ref{eq:T_i}). For small values of $p_\mathrm{flip}$ ($<0.25$) the algorithms have similar performance, however for higher error probability, applying Alg.~\ref{alg:SGA_no_bias} ignoring user identification errors (blue curve) leads to significant degradation in the throughput. The proposed Alg.~\ref{alg:SGA_bias_reduced} with perfect knowledge of the weight in (\ref{eq:weight}) (orange curve) and with imperfect weight (green, red, purple) always outperforms Alg.~\ref{alg:SGA_no_bias} with identification errors. This highlights the utility of our proposed algorithm as it mitigates the effect of user identification errors.
\begin{figure}
    \centering
    \includegraphics[width=\linewidth]{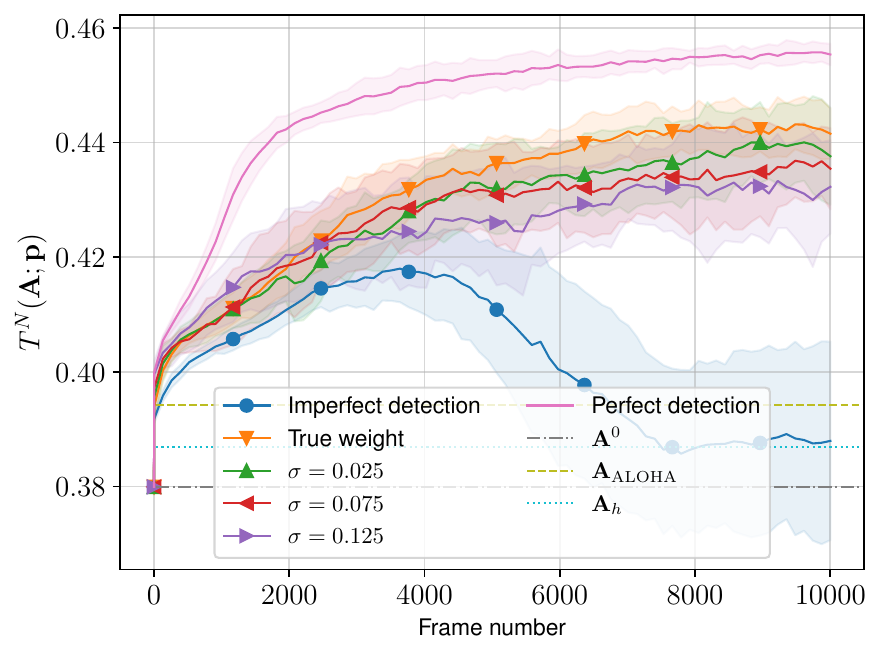}
    \caption{Trajectories of the different method presented in Fig.~\ref{fig:bsc} with $p_\mathrm{flip}=0.35$.}
    \label{fig:traj_bsc_0.15}
\end{figure}
Fig.~\ref{fig:traj_bsc_0.15} shows the trajectories for each method for   $p_\mathrm{flip}=0.35$. We can see that the blue curve (imperfect detection) starts by increasing slightly before falling into a bad local maxima whilst the proposed algorithms converge to a better one. We should note that the reason the curves for methods different than the perfect detection are not monotonically increasing is because they are evaluated based on the objective $T^N(\bf{A};\mathbf{p})$ for the true $\bf{p}$ whilst the activity vector $\hat{\bf{X}}$ used in the computation of the gradient is drawn from $\hat{\bf{p}}$.
The green, red, purple, and curves correspond to the proposed Alg.~\ref{alg:SGA_bias_reduced} with imperfect knowledge of the weight, obtained by perturbing the true weight by Gaussian noise. As the standard deviation of the noise increases, the throughput performance degrades. This is due to the fact that the noise introduces bias in the gradient estimates. Nevertheless, Alg.~\ref{alg:SGA_bias_reduced} with weights corrupted by noise still outperforms Alg.~\ref{alg:SGA_no_bias} (blue curve), which does not account for user identification errors.     

\subsection{Asymmetric Errors}
\label{sec:num:asym}
We now consider a detection algorithm that has $\mathrm{Pr}(\mathrm{false~alarm})=0$ and $\mathrm{Pr}(\mathrm{miss~detection})=p_\mathrm{miss}$ for $p_{\mathrm{miss}} \in [0,0.5]$ common to each device. In other words, if device $i$ is active, the probability that it is not detected is $p_{\mathrm{miss}}$. In contrast to the scenario of Fig.~\ref{fig:bsc}, if device $i$ is not active, then the probability of it being detected is zero.
\begin{align}
    \hat{p}_i = (1-p_{\mathrm{miss}})p_i,~i = 1,\ldots,N.
\end{align}
\begin{figure}[h!]
    \centering
    \includegraphics[width=\linewidth]{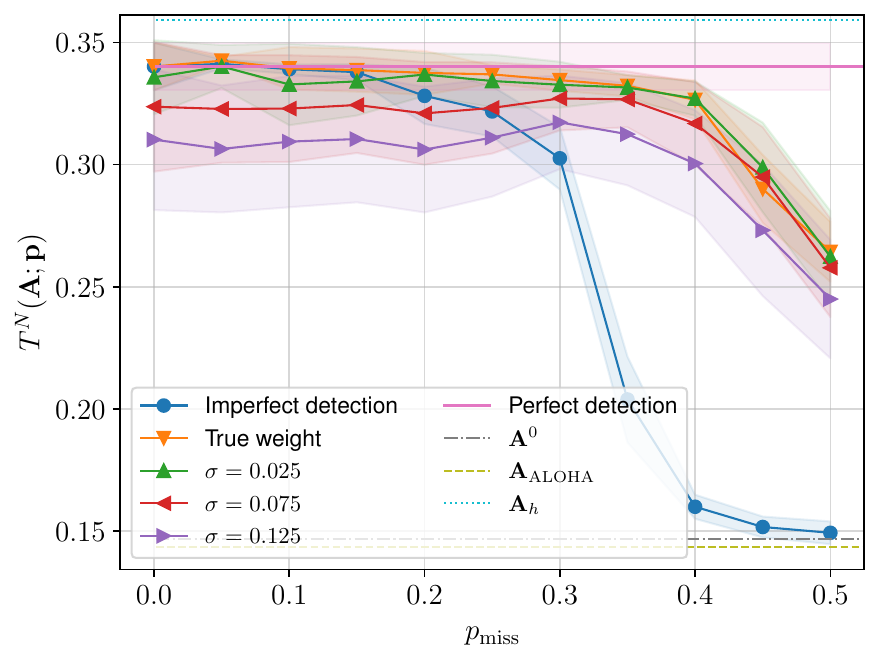}
    \caption{Resulting throughput after 10000 frames for different values of $p_{miss}$.}
    \label{fig:bec}
\end{figure}
Fig.~\ref{fig:bec} shows the impact of the error probability $p_{\mathrm{miss}}$ on the throughput achieved by each algorithm. The activity probability of each device $i$, $p_i$, is drawn independently from the uniform distribution $\mathrm{Unif}[0,1]$The activity probability of each device $i$, $p_i$, is drawn once independently from the uniform distribution $\mathrm{Unif}[0,0.9]$ and kept constant across the runs. That is, there is a high probability that many devices will have an activity probability $p_i > \frac{1}{2}$.

Observe in Fig.~\ref{fig:bec}, the initial random allocation $\mathbf{A}^0$ and the ALOHA allocation $\mathbf{A}_{\mathrm{ALOHA}}$ have the worst performance. Due to the high heterogeneity of the network, the gap between $\mathbf{A}_{\mathrm{ALOHA}}$ and the best performing method is larger than in the previous scenario. In this case, as some users have a probability of activity $p_i>\frac{1}{2}$, the greedy heuristic (in dashed blue) performs the best. 
The pink curve corresponds to Alg.~\ref{alg:SGA_no_bias} with perfect user activity estimation. For small and large values of $p_{\mathrm{miss}}$, the pink curve is higher than any of the curves corresponding to proposed Alg.~\ref{alg:SGA_bias_reduced} and Alg.~\ref{alg:SGA_no_bias} with user identification errors. In particular, observe that, as for the symmetric errors, applying Alg.~\ref{alg:SGA_no_bias} while ignoring user identification errors (blue curve) leads to significant degradation in the throughput. Alg.~\ref{alg:SGA_bias_reduced} with perfect knowledge of the weight in (\ref{eq:weight}) to mitigate user identification errors (orange curve) always outperforms Alg.~\ref{alg:SGA_no_bias}. 

Again, the green, red, purple, and curves show that the throughput degrades when the standard deviation of the prior increase but they still outperform Alg.~\ref{alg:SGA_no_bias} (blue curve), which does not account for user identification errors, when the number of errors is high.

\subsection{Errors Arising from GAMP-Based Detection}
\label{sec:num:gamp}
Finally, we consider the use of an active user detection algorithm based on GAMP \cite{Rangan2011generalized}. As described in Section~\ref{sec:preamble}, each user $i$ is given a unique complex preamble (or pilot) $\mathbf{s}_i$ of length $T=15$ with $s_{i,j}\sim\mathcal{CN}(0,1)$, for $j=1\ldots10$, in each frame a new set of pilots is drawn. As $T<N$, it is impossible to find a set pilots that are all mutually orthogonal. Pilots are sent with a unit transmit power and are received by a single antenna (i.e., $N_A = 1$ in this scenario). Users are active according to the following distribution: $\mathbf{p}=[0.01, 0.03, 0.09, 0.14, 0.21, 0.21, 0.23, 0.27, 0.32, 0.33, 0.34,\\ 0.42, 0.43, 0.47, 0.52, 0.56, 0.58, 0.61, 0.65, 0.8]$. Pilots are multiplied by a complex channel coefficient and corrupted by some Gaussian noise. The modulus of the channel coefficients are $|\mathbf{H}|=[1.6, 0.8, 0.5, 0.5, 1.2, 1., 2.4, 0.3, 1.0, 0.1, 0.5, 1.2, 1.7, 0.2,\\2.5, 1.6, 2.1, 1.4, 0.5, 0.2]$ (the matrix $\mathbf{H}$ is a vector as there is a single receive antenna).

As the posterior distribution $\hat{\mathbf{p}}$ of GAMP is difficult to compute it is estimated using a Monte Carlo approximation by running the network for $10000$ frames and this estimation is used as the denominator of the importance weight.  
\begin{figure}[h!]
    \includegraphics[width=\linewidth]{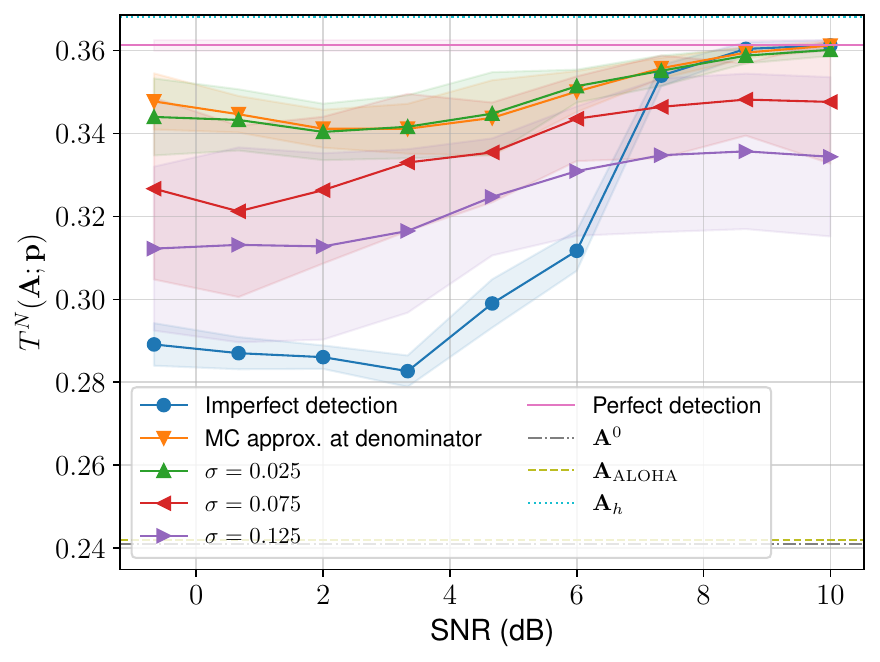}
    \caption{Resulting throughput after 10000 frames for different values of SNR.}
    \label{fig:gamp}
\end{figure}

Fig.~\ref{fig:gamp} shows the effect of the additive Gaussian noise on the throughput of the resulting allocation given by each algorithm. The figure is shown as a function de the transmit SNR, \textit{i.e.} the ratio of the transmit power, 1, over the variance of the noise, without taking into account the channel coefficient. Note that the received SNR of the user with the worst channel ranges between $[0,-10.6]$dB. The algorithms are run for 10000 frames. 

In Fig.~\ref{fig:gamp}, as for the asymmetric errors, the random initial allocation $\mathbf{A}^0$ and the ALOHA allocation $\mathbf{A}_{\mathrm{ALOHA}}$ have the worst performance. In this case, the greedy heuristic (dashed blue) performs the best as some users have a probability of activity $p_i>0.5$. 
For small and large values of SNR the orange curve representing the weight made of the true target and the empirical distribution estimation is above the curves of the algorithm of Alg.~\ref{alg:SGA_bias_reduced}.
For high values of SNR, almost no errors are introduced by the GAMP algorithm. Hence all algorithms achieve a similar throughput. For smaller values of SNR ($\leq 6$dB), the Alg.~\ref{alg:SGA_no_bias} with imperfect detection (blue curve) has degraded performance This is in parts due to the fact that the most likely device has a bad channel gain meaning that GAMP algorithm often fails to decoded it. Hence the resulting resource allocation does not manage to take it sufficiently into account leading to a decrease in performance. Note that the proposed Alg.~\ref{alg:SGA_bias_reduced_imperfect_w} (green, red, and purple) outperform Alg.~\ref{alg:SGA_no_bias} (blue curve). 

\section{Conclusions}\label{sec:conclusion}

A problem arising in the design of frame slotted ALOHA algorithms with heterogeneous devices is the need for perfect activity estimates in each frame. In this paper, we have studied the impact of imperfect activity estimation and observed a significant throughput performance degradation in existing algorithms. To overcome this problem, we have proposed an importance weighted bias mitigation strategy. This algorithm is shown to be capable of guaranteeing almost sure convergence of stochastic gradient methods to a stationary point of the throughput maximization problem. A numerical study shows that our method outperforms existing algorithms which do not account for imperfect activity estimation. Moreover, our approach is robust to uncertainty in the importance weights. 

\appendices

\section{Proof of Theorem~\ref{thrm:convergence}}\label{app:proof}

	Observe that the update rule can be written as 
\begin{align}
    \mathbf{A}^{t+1} &= \mathbf{A}^t + \gamma^{t+1}\left(\mathbb{E}_{\hat{\mathbf{X}}^{t+1}\sim\hat{\mathbf{p}}}[w(\hat{\mathbf{X}}^{t+1})g(\mathbf{A}^t;\hat{\mathbf{X}}^{t+1})]\right.\notag\\
                     & \left. w(\hat{\mathbf{X}}^{t+1})g(\mathbf{A}^{t};\hat{\mathbf{X}}^{t+1}) - \mathbb{E}_{\hat{\mathbf{X}}^{t+1}\sim\hat{\mathbf{p}}}[w(\hat{\mathbf{X}}^{t+1})g(\mathbf{A}^t;\hat{\mathbf{X}}^{t+1})]\right)\notag\\
	&~~~ + \gamma^{t+1}\mathbf{Z}^{t+1}\notag\\
	&= \mathbf{A}^{t} + \gamma^{t+1}\left(\mathbb{E}_{\mathbf{X}^{t+1}\sim\mathbf{p}}[g(\mathbf{A}^t;\mathbf{X}^{t+1})] + \partial \tilde{\mathbf{M}}^{t+1}\right)\notag\\
	&~~~+ \gamma^{t+1}\mathbf{Z}^{t+1}\notag\\
    &= \mathbf{A}^{t+1} + \gamma^{t+1}\left(\nabla_{\mathbf{A}^t} T(\mathbf{A}^t) + \partial \tilde{\mathbf{M}}^{t+1}\right) + \gamma^{t+1}\mathbf{Z}^{t+1},
\end{align}
where
\begin{align}
	&\partial \tilde{\mathbf{M}}^{t+1}\notag\\
    &= w(\hat{\mathbf{X}}^{t+1})g(\mathbf{A}^{t};\hat{\mathbf{X}}^{t+1}) - \mathbb{E}_{\hat{\mathbf{X}}^{t+1}\sim\hat{\mathbf{p}}}[w(\hat{\mathbf{X}}^{t+1})g(\mathbf{A}^t;\hat{\mathbf{X}}^{t+1})]
\end{align} 
satisfies $\mathbb{E}_{t+1}[\partial \tilde{\mathbf{M}}^{t+1}] = 0$, where $\mathbb{E}_{t+1}[\cdot]$ is the conditional expectation on with respect to the $\sigma$-algebra generated by the iterates up to time $t$. The last equality then follows from the identity in (\ref{eq:IS_identity}).

To establish convergence, we apply Theorem 5.2.1 in \cite{kushnerStochasticApproximationRecursive2003}. To do so, we verify the following conditions:
\begin{itemize}
	\item A.5.2.1: 
	\begin{align}
        \sup_t \mathbb{E}_{\hat{\mathbf{X}}^{t+1}\sim\hat{\mathbf{p}}}[w(\hat{\mathbf{X}}^{t+1})g(\mathbf{A}^t;\hat{\mathbf{X}}^{t+1})] < \infty,
	\end{align}
	which holds under the assumption that $w(\mathbf{x}) < \infty$ for all $\mathbf{x} \in \{0,1\}^N$.
	\item A.5.2.2: There exists $\overline{g}: \mathbb{R}^{N \times M} \rightarrow \mathbb{R}$ such that 
	\begin{align}
        \mathbb{E}_{\hat{\mathbf{X}}^{t+1}\sim\hat{\mathbf{p}}}[w(\hat{\mathbf{X}}^{t+1})g(\mathbf{A}^t;\hat{\mathbf{X}}^{t+1})] = \overline{g}(\mathbf{A}^t),
	\end{align} 
    which holds by setting $\overline{g}(\mathbf{A}^t) = \nabla_{\mathbf{A}^t} T(\mathbf{A}^t; \mathbf{p})$.
\item A.5.2.3: $\nabla_{\mathbf{A}} T(\mathbf{A};\mathbf{p})$ is continuous, which immediately follows from the definition of $T(\mathbf{A};\mathbf{p})$. 
	\item A.5.2.5: The gradient estimate $w(\hat{\mathbf{X}}^{t+1})g(\mathbf{A}^t;\hat{\mathbf{X}}^{t+1})$ is unbiased, which follows from (\ref{eq:IS_identity}). 
\end{itemize}
A further condition on the constraint must also hold. As the same constraint is present in \cite{zhengStochasticResourceOptimization2021} for the case without errors, the same argument can be applied to complete the proof. 
 
\bibliographystyle{IEEEtran}
\bibliography{Aloha_new.bib}
\end{document}